\documentclass[letterpaper, 10 pt, conference]{ieeeconf}  %

\IEEEoverridecommandlockouts                              %
\usepackage{tikz}
\usepackage{subcaption}
\usepackage{booktabs}
\usepackage{url}
\usepackage{cite}
\usepackage{amsmath,amssymb,amsfonts}
\usepackage{algorithmic}
\usepackage{graphicx}
\usepackage{textcomp}
\usepackage{xcolor}

\def\BibTeX{{\rm B\kern-.05em{\sc i\kern-.025em b}\kern-.08em
    T\kern-.1667em\lower.7ex\hbox{E}\kern-.125emX}}
\newtheorem{thm}{Theorem }%
\newtheorem{prop}{Proposition}%
\newtheorem{cor}{Corollary}%
\newtheorem{lem}{Lemma }%
\newtheorem{rem}{Remark }%
\newtheorem{ass}{Assumption}%

\usepackage[ruled,vlined,linesnumbered]{algorithm2e}
\usepackage{makecell}

\newcommand{\Real}{\mathbb R}
\newcommand{\eps}{\varepsilon}
\newcommand{\abs}[1]{\left\vert#1\right\vert}

\newcommand{\U}{\mathcal{U}}

\let\subset\subseteq

\usepackage{xcolor}

\title{\LARGE \bf
Verifiable Error Bounds for Physics-Informed Neural Network Solutions of Lyapunov and Hamilton--Jacobi--Bellman Equations
}

\author{Jun Liu  %
\thanks{This research was supported in part by the NSERC of Canada and the Canada Research Chairs program. The author is with the Department of Applied Mathematics, Faculty of Mathematics, University of Waterloo, Waterloo, Ontario N2L 3G1, Canada.  Email: \texttt{j.liu@uwaterloo.ca}}%
}

\begin{document}

\maketitle
\pagestyle{plain}

\begin{abstract}
Many core problems in nonlinear systems analysis and control can be recast as solving partial differential equations (PDEs) such as Lyapunov and Hamilton--Jacobi--Bellman (HJB) equations. Physics-informed neural networks (PINNs) have emerged as a promising mesh-free approach for approximating their solutions, but in most existing works there is no rigorous guarantee that a small PDE residual implies a small solution error. This paper develops verifiable error bounds for approximate solutions of Lyapunov and HJB equations, with particular emphasis on PINN-based approximations. For both the Lyapunov and HJB PDEs, we show that a verifiable residual bound yields relative error bounds with respect to the true solutions as well as computable \emph{a posteriori} estimates in terms of the approximate solutions. For the HJB equation, this also yields certified upper and lower bounds on the optimal value function on compact sublevel sets and quantifies the optimality gap of the induced feedback policy. We further show that one-sided residual bounds already imply that the approximation itself defines a valid Lyapunov or control Lyapunov function. We illustrate the results with numerical examples.

\end{abstract}
\begin{keywords}
Physics-informed neural networks, formal verification, error bounds, Lyapunov equation, Hamilton--Jacobi--Bellman (HJB) equation 
\end{keywords}

\section{Introduction}

Many core problems in nonlinear systems analysis and control admit partial differential equation (PDE) characterizations. 
For example, Lyapunov functions for stability analysis can be characterized through first-order PDEs associated with the system dynamics \cite{camilli2001generalization,giesl2007construction,liu2025physics}, while optimal value functions for nonlinear feedback control are characterized by Hamilton--Jacobi--Bellman (HJB) equations \cite{bardi1997optimal,beard1997galerkin}. 
Solving such PDEs therefore provides a direct route to stability certificates, performance analysis, and feedback synthesis.

Despite their importance, these PDEs are notoriously difficult to solve. Classical grid-based numerical methods often become computationally prohibitive due to the curse of dimensionality and typically provide only discretized approximations that do not readily yield certificates for stability or optimality \cite{bardi1997optimal,mitchell2005time,beard1997galerkin}. 
This challenge has motivated growing interest in neural-network-based PDE solvers, especially physics-informed neural networks (PINNs), which approximate the solution by training a neural network to satisfy the governing PDE through residual minimization at sampled collocation points \cite{raissi2019physics,lagaris1998artificial}. 
PINNs and related neural-network methods have recently been used to compute approximate solutions of Lyapunov and HJB equations \cite{liu2023towards,liu2025physics,meng2024physics,furfaro2022physics,fotiadis2025physics} and learn Lyapunov and control Lyapunov functions \cite{chang2019neural,grune2021computing,gaby2022lyapunov,zhou2022neural,liu2023towards,zhou2024physics,liu2025physics,liu2025formally}.

These methods train neural networks by minimizing PDE residuals evaluated at sampled collocation points. 
A small residual is therefore often interpreted as evidence that the learned approximation is accurate. However, this implication is far from automatic. This raises a natural question: 
\begin{quote}
\emph{How does a pointwise residual bound translate into a true solution error bound for the PDE?}
\end{quote}

Answering this question is particularly important in control applications. For instance, when computing Lyapunov functions, one must ensure that the approximation provides a valid stability certificate. Similarly, for HJB equations, one would like guarantees on the optimal value function and on the performance of the feedback law induced by the approximation. In most existing PINN-based approaches, however, the residual error is used only as a training objective, and there is generally no rigorous relationship between the residual magnitude and the error in the computed solution.

The main goal of this paper is to establish such a relationship for stationary Lyapunov and HJB equations. 
Given an approximate solution, possibly produced by a PINN, we derive residual conditions under which the true solution error can be rigorously bounded. 
For both the Lyapunov and HJB PDEs, we propose a remarkably simple relative pointwise residual bound with respect to the stage cost and show that, when verified over the entire domain of interest, it yields clean relative error bounds with respect to the true solution, as well as easily computable \emph{a posteriori} bounds in terms of the approximate solution on the same domain.

More specifically, the main contributions of this paper are:
\begin{enumerate}
\item Certified error bounds for Lyapunov PDE solutions derived from pointwise relative residual bounds.
\item Certified upper and lower bounds for HJB value functions, which also yield rigorous guarantees on the optimality gap of the induced feedback policy.
\item One-sided pointwise residual conditions that directly certify Lyapunov or control Lyapunov functions.
\item Practically verifiable sufficient conditions for pointwise residual bounds based on local Hessian estimates.
\end{enumerate}

Although motivated by PINN-based approximations, the analysis in this paper is not tied to a particular training method or network architecture. 
The results apply to any approximate solution satisfying the stated residual conditions. 
To the best of our knowledge, this is the first work to derive formally verifiable solution error bounds for neural approximations of Lyapunov and HJB equations.

\section{Preliminaries}\label{sec:prelim}

Consider the autonomous system
\begin{equation}
    \label{eq:sys}
    \dot x = f(x),
\end{equation}
and the control-affine system
\begin{equation}
    \label{eq:control_sys}
    \dot x = f(x) + g(x)u,
\end{equation}
where \(f:\Real^n\to\Real^n\) and \(g:\Real^n\to\Real^{n\times k}\), with state \(x\in\Real^n\) and control \(u\in\Real^k\). It is assumed that $f(0)=0$, so both system \eqref{eq:sys} and \eqref{eq:control_sys} have an equilibrium at the origin in the absence of control. We assume that for each initial condition in the domain of interest and each admissible control input, a unique state trajectory exists. 
We write \(\phi(t,x)\) for a trajectory of \eqref{eq:sys} starting from \(x\), and \(\phi(t,x,u)\) for a trajectory of \eqref{eq:control_sys} starting from \(x\) under the input \(u(\cdot)\).

In this paper, we focus on two types of PDEs:

\textbf{Lyapunov PDE (stability certificates on a prescribed domain):}
For \eqref{eq:sys} with an asymptotically stable equilibrium at \(x=0\), a Lyapunov function on a set \(\Omega\subset\Real^n\) containing the origin can be characterized by the first-order linear PDE
\begin{equation}
\label{eq:lyap_pde}
DV(x)\cdot f(x) = -\omega(x),\quad x\in\Omega,
\end{equation}
together with the normalization \(V(0)=0\), where $DV$ denotes the gradient of $V$ and 
\(\omega:\Real^n\to\Real_{\ge0}\) is positive definite. 
Thus, computing \(V\) amounts to solving the PDE (\ref{eq:lyap_pde}). Existence and regularity of solutions to \eqref{eq:lyap_pde} are discussed later in Lemma~\ref{lem:lyap} and its accompanying footnote, in Section \ref{sec:lyap}. 
The resulting solution, or a sufficiently accurate smooth approximation of it, can provide a Lyapunov certificate, and its sublevel sets can be used to verify inner estimates of the domain of attraction of $x=0$ for (\ref{eq:sys}) defined by
\[
\mathcal{D} := \{x\in\Real^n:\ \lim_{t\to\infty}\|\phi(t,x)\|=0\}. 
\]

\textbf{HJB equation (optimal value and feedback control synthesis):}
For the control-affine system (\ref{eq:control_sys}), consider the infinite-horizon cost
\begin{equation}
\label{eq:cost}
J(x,u)=\int_0^\infty \Big(Q(\phi(t,x,u)) + u(t)^T\,R(\phi(t,x,u))\,u(t)\Big)\,dt,
\end{equation}
where \(Q:\Real^n\to\Real_{\ge0}\) and \(R:\Real^n\to\Real^{k\times k}\). 
Let \(\U\) denote the set of admissible controls that stabilize the system with finite cost. 
The optimal value function is
\begin{equation}\label{eq:optimal_value}
V^*(x)=\inf_{u\in\U} J(x,u).    
\end{equation}
It is characterized by the Hamilton--Jacobi--Bellman (HJB) PDE
\begin{equation}
\label{eq:hjb_inf}
0=\inf_{u\in\Real^k}\left\{Q+u^T R u + DV\cdot\big(f+gu\big)\right\},\quad x\in\Omega,
\end{equation}
together with the normalization \(V(0)=0\), where the dependence on \(x\) is omitted for notational simplicity. 
Since the expression inside the braces is quadratic in \(u\), the minimizer is
\begin{equation}
\label{eq:hjb_u_star}
u^*(x)= -\tfrac12 R^{-1}(x)\,g^T(x)\,DV^*(x)^T. 
\end{equation}
Substituting this into (\ref{eq:hjb_inf}) yields the equivalent nonlinear PDE
\begin{equation}
\label{eq:hjb_reduced}
0 = Q + DV\cdot f
-\tfrac14\, DV\,g\,R^{-1}\,g^T\,DV^T,\quad x\in\Omega,
\end{equation}
again with \(V(0)=0\).

\section{Error bounds for the Lyapunov PDE}\label{sec:lyap}

For the Lyapunov PDE \eqref{eq:lyap_pde}, consider the representation
\begin{equation}
\label{eq:lyap_integral}
V(x)=\int_0^\infty \omega(\phi(t,x))\,dt,
\end{equation}
where $\phi(t,x)$ denotes the trajectory of \eqref{eq:sys} starting from $x$. %

\begin{ass}
\label{ass:lyap}
The origin is asymptotically stable for system \eqref{eq:sys}. Let $\mathcal D$ be the domain of attraction for the origin and $\Omega\subset\mathcal{D}$ be forward invariant and contain the origin in its interior. Assume that $f$ is locally Lipschitz, $\omega$ is continuous, and the improper integral in \eqref{eq:lyap_integral} converges for all $x$ in some neighborhood of the origin
\footnote{One sufficient condition for this to hold is that $\omega$ is Lipschitz around the origin and the origin is
exponentially stable (see \cite[Remark 1]{liu2025physics}).}. 
\end{ass}

We state the following result from \cite[Proposition~2]{liu2025physics}.

\begin{lem}\label{lem:lyap}
Under Assumption~\ref{ass:lyap}, the function $V$ defined by \eqref{eq:lyap_integral} is the unique continuous solution to the Lyapunov PDE \eqref{eq:lyap_pde} on $\Omega$ satisfying $V(0)=0$ in the viscosity sense\footnote{Under the additional assumption that $f$ is continuously differentiable, $A=Df(0)$ is Hurwitz, and $\omega$ is continuously differentiable, it is shown in \cite[Proposition~2]{liu2025physics} that $V$ is also the unique continuously differentiable solution to (\ref{eq:lyap_pde}) on $\Omega$. We also note that the forward invariance of $\Omega$ was implicitly assumed in \cite[Proposition~2]{liu2025physics} and is stated explicitly here.}.
\end{lem}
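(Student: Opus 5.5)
The plan has three steps: show that $V$ in \eqref{eq:lyap_integral} is finite and continuous on $\Omega$, show that it is a viscosity solution, and prove uniqueness with a comparison argument along trajectories. The structure throughout is the dynamic programming identity
\begin{equation*}
V(x)=\int_0^T \omega(\phi(t,x))\,dt + V(\phi(T,x)),\qquad T\ge 0,
\end{equation*}
which follows from the semigroup property of $\phi$ and the forward invariance of $\Omega$.

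\textbf{Finiteness and continuity.} Let $N$ be the neighborhood of the origin on which the integral converges. Choose a small sublevel-type neighborhood $B\subset N$ that is forward invariant; this uses the stability of the origin. For $x\in\Omega\subset\mathcal D$, the trajectory enters $B$ at some finite time $T_x$. The dynamic programming identity then gives $V(x)<\infty$.

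For continuity on $\Omega$, I first show $V$ is continuous near $0$ and $V(x)\to 0$ as $x\to 0$. Stability of the origin makes trajectories from small $x$ stay in a small ball. I expect this step to be the main obstacle: mere convergence of the integral near $0$ does not immediately give $V(x)\to0$. I would first try to obtain uniform smallness of tails, $\int_T^\infty\omega(\phi(t,x))\,dt$, for $x$ in a compact neighborhood of the origin, using asymptotic stability together with Dini or monotonicity-type arguments. If that fails, I would restrict to the regime in the footnote, where exponential stability and a Lipschitz $\omega$ give an explicit bound $V(x)\le C\|x\|$.

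Continuity at a general $x\in\Omega$ then follows from the identity. Local Lipschitzness of $f$ gives continuous dependence of $\phi(T,\cdot)$ on initial conditions over the finite horizon $[0,T]$. Asymptotic stability makes the entrance time into a small ball locally uniform, since a neighborhood of $x$ reaches that ball at a common time. Together with continuity near the origin, this gives continuity of $V$ at $x$.

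\textbf{Viscosity solution.} Take a $C^1$ test function $\varphi$ touching $V$ from above at $x_0$ (the case from below is symmetric). Combining the identity with $V\le\varphi$ near $x_0$ and equality at $x_0$ gives
\begin{equation*}
\varphi(x_0)\le \int_0^h\omega(\phi(t,x_0))\,dt+\varphi(\phi(h,x_0)).
\end{equation*}
Dividing by $h$ and letting $h\to0^+$ yields $D\varphi(x_0)\cdot f(x_0)\ge-\omega(x_0)$. The reverse inequality holds for test functions touching from below. Hence $V$ is a viscosity solution of \eqref{eq:lyap_pde}.

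\textbf{Uniqueness.} Let $W$ be another continuous viscosity solution on $\Omega$ with $W(0)=0$. For a linear first-order equation, the standard characterization of viscosity sub- and supersolutions through monotonicity along the flow gives
\begin{equation*}
W(x)=\int_0^T\omega(\phi(t,x))\,dt+W(\phi(T,x)),
\end{equation*}
that is, $t\mapsto W(\phi(t,x))+\int_0^t\omega$ is constant. One way to prove this is to show the map is nonincreasing and nondecreasing, via sup/inf-convolution or the comparison lemma for viscosity solutions along trajectories. Letting $T\to\infty$, $\phi(T,x)\to0$ and $W$ is continuous with $W(0)=0$, so $W(x)=V(x)$.
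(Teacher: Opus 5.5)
\textbf{Gap: continuity of $V$ at the origin.} The genuine gap is the step you flag yourself. Under Assumption~\ref{ass:lyap} as stated, it cannot be closed: pointwise convergence of \eqref{eq:lyap_integral} near the origin does not imply $V(x)\to 0$ as $x\to 0$. Your first tool is also circular. Dini's theorem needs the monotone limit $V=\lim_{T\to\infty}\int_0^T\omega(\phi(t,\cdot))\,dt$ to be continuous before it gives uniformly small tails.

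Here is a concrete obstruction in $\Real^2$, using polar coordinates $(r,\theta)$ with $\theta\in(-\pi,\pi]$:
\begin{itemize}
\item Take a smooth $\tilde\chi:\Real\to[0,1]$ equal to $1$ on $[5/4,7/4]$ and vanishing outside $(1,2)$. Set $\chi(r,\theta)=\tilde\chi(r/|\theta|)$, with $\chi=0$ on $\theta=0$.
\item Let $F=r^2(1-\chi)+r^3\chi$, $f(x)=-F(r,\theta)\,x/r$, $\omega(x)=\|x\|^2$ and $\Omega=\{\|x\|\le 1/2\}$.
\item On the support of $\chi$ one has $r/2<|\theta|<r$. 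This gives $|\partial_r F|+r^{-1}|\partial_\theta F|\le C$ for $0<r\le 1$. Together with $F\le r^2$, this makes $f$ Lipschitz on the unit disk.
\item Rays are invariant and $\dot r=-F<0$ for $r>0$. So the origin is asymptotically stable and $\Omega$ is forward invariant.
\item Along the ray of angle $\theta$, $V(r_0,\theta)=\int_0^{r_0}\bigl(1-\chi(s,\theta)+s\chi(s,\theta)\bigr)^{-1}ds\le r_0+\ln 2$. Hence the integral converges everywhere and Assumption~\ref{ass:lyap} holds.
\item Take $x_k$ with polar coordinates $(2\theta_k,\theta_k)$ and $\theta_k\downarrow 0$. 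Then $V(x_k)\ge\int_{5\theta_k/4}^{7\theta_k/4}ds/s=\ln(7/5)$, while $x_k\to 0$ and $V(0)=0$.
\end{itemize}
Your own uniqueness step uses only continuity of the competitor $W$. Combined with this example, it shows that no continuous viscosity solution with $W(0)=0$ exists here. So the lemma needs more than Assumption~\ref{ass:lyap}.

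\textbf{What to do instead.} Your fallback is therefore a necessary strengthening of the hypotheses, not just a convenience. You need an assumption that yields $\lim_{x\to 0}V(x)=0$. One option is the footnote's exponential stability with $\omega$ Lipschitz near $0$: then $\|\phi(t,x)\|\le Me^{-\lambda t}\|x\|$ gives $V(x)\le (LM/\lambda)\|x\|$. Another is uniform convergence of the integral on a neighbourhood of the origin. With that in place, the rest of your plan is the standard one and it works:
\begin{itemize}
\item For $y$ near $x$, the dynamic programming identity gives $|V(y)-V(x)|\le\bigl|\int_0^T(\omega(\phi(t,y))-\omega(\phi(t,x)))\,dt\bigr|+2\sup_{B_\delta}V$. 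This yields continuity on $\Omega$.
\item Your derivation of the two viscosity inequalities is correct.
\item Uniqueness follows from the monotonicity-along-trajectories characterization of continuous viscosity sub- and supersolutions of a linear first-order equation. Apply it on the interior of $\Omega$, where the test-function definition makes sense.
\end{itemize}
The paper gives no argument of its own for this lemma; it imports it from \cite[Proposition~2]{liu2025physics}.
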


Let $\hat V$ be an approximate solution to \eqref{eq:lyap_pde} with residual
\begin{equation}\label{eq:lyap_residual}
r(x):=D\hat V(x)\cdot f(x)+\omega(x).    
\end{equation}
We have the following result. 

\begin{thm} 
\label{thm:lyap_error}
Let Assumption~\ref{ass:lyap} hold and $\hat V\in C^1(\Omega)$ satisfy $\hat V(0)=0$. 
Assume that the residual $r$ in \eqref{eq:lyap_residual} satisfies 
\begin{equation}\label{eq:relative_residual_lyap}
|r(x)|\le \varepsilon\,\omega(x),\qquad \forall x\in\Omega,
\end{equation}
for some $\varepsilon \in [0,1)$. Then for all $x\in\Omega$,
\begin{equation}\label{eq:relative_error}
|\hat V(x)-V(x)|\le \varepsilon\,V(x),
\end{equation}
\begin{equation}\label{eq:posterior_error}
|\hat V(x)-V(x)|
\le
\frac{\varepsilon}{1-\varepsilon}\,\hat V(x),
\end{equation}
where $V$ is the unique solution to \eqref{eq:lyap_pde} defined by \eqref{eq:lyap_integral}. 
\end{thm}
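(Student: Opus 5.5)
The plan is to integrate the residual identity along trajectories and compare with the integral representation \eqref{eq:lyap_integral}. Fix $x\in\Omega$ and write $\phi(t)=\phi(t,x)$. Since $\Omega$ is forward invariant, $\phi(t)\in\Omega$ for all $t\ge0$. Since $\hat V\in C^1(\Omega)$, the chain rule and \eqref{eq:lyap_residual} give
\[
\frac{d}{dt}\hat V(\phi(t)) = D\hat V(\phi(t))\cdot f(\phi(t)) = -\omega(\phi(t)) + r(\phi(t)).
\]
Integrating over $[0,T]$ yields
\[
\hat V(x) - \hat V(\phi(T)) = \int_0^T \omega(\phi(t))\,dt - \int_0^T r(\phi(t))\,dt.
\]

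Next let $T\to\infty$. Because $\Omega\subset\mathcal D$, we have $\phi(T)\to0$. By continuity of $\hat V$ at the origin and $\hat V(0)=0$, it follows that $\hat V(\phi(T))\to0$. The integral $\int_0^\infty\omega(\phi(t))\,dt = V(x)$ is finite. This uses Assumption~\ref{ass:lyap}: the trajectory eventually enters the neighborhood of the origin where the integral converges, and the finite-time portion is bounded by continuity. Then \eqref{eq:relative_residual_lyap} gives $|r(\phi(t))|\le\varepsilon\,\omega(\phi(t))$, so $\int_0^\infty r(\phi(t))\,dt$ converges absolutely, with absolute value at most $\varepsilon V(x)$. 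Passing to the limit gives the exact identity
\[
\hat V(x)-V(x) = -\int_0^\infty r(\phi(t,x))\,dt,
\]
and hence $|\hat V(x)-V(x)|\le\varepsilon V(x)$, which is \eqref{eq:relative_error}.

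For \eqref{eq:posterior_error}, start from \eqref{eq:relative_error}, which implies $\hat V(x)\ge (1-\varepsilon)V(x)$. Since $\varepsilon<1$, this gives $V(x)\le \hat V(x)/(1-\varepsilon)$. Substituting back yields $|\hat V(x)-V(x)|\le \varepsilon V(x)\le \frac{\varepsilon}{1-\varepsilon}\hat V(x)$.

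The main obstacle is the limit step: justifying $V(x)<\infty$ for every $x\in\Omega$ and not just near the origin, and making sure $\hat V(\phi(T))\to0$. The first follows by splitting the integral at the first time $t_0$ at which the trajectory enters the neighborhood where \eqref{eq:lyap_integral} converges. Past $t_0$ the tail equals $V(\phi(t_0))<\infty$, and the portion on $[0,t_0]$ is finite by continuity. The second only needs continuity of $\hat V$ at $0$, which holds because $0$ is interior to $\Omega$. Note that the argument uses the integral representation directly, so Lemma~\ref{lem:lyap} is needed only to identify this $V$ as the unique solution of \eqref{eq:lyap_pde}.
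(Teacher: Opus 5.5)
Your proposal is correct and follows essentially the same route as the paper. You integrate the residual identity along trajectories, pass to the limit using $\phi(T,x)\to 0$ and $\hat V(0)=0$ to obtain $V(x)-\hat V(x)=\int_0^\infty r(\phi(t,x))\,dt$, bound this by $\varepsilon V(x)$, and rearrange for the a posteriori bound. You also prove finiteness of $V(x)$ on all of $\Omega$ (by splitting at the entry time into the neighborhood from Assumption~\ref{ass:lyap}) and absolute convergence of the residual integral; both are details the paper leaves implicit by citing Lemma~\ref{lem:lyap}.
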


\begin{proof}
Let $\phi(t,x)$ denote the trajectory of \eqref{eq:sys} starting from $x$. 
Along trajectories, by the definition of \eqref{eq:lyap_residual}, 
\begin{align*}
\frac{d}{dt}\hat V(\phi(t,x))
& =
-\omega(\phi(t,x))+r(\phi(t,x)).
\end{align*}
Integrating over $[0,T]$ gives
\[
\hat V(\phi(T,x))-\hat V(x)
=
-\int_0^T \omega(\phi(t,x))\,dt
+
\int_0^T r(\phi(t,x))\,dt .
\]
By Lemma \ref{lem:lyap}, the true solution satisfies (\ref{eq:lyap_integral}). 
Since $\Omega\subset\mathcal{D}$, we have $\phi(t,x)\to0$ as $t\to\infty$. 
By continuity and $\hat V(0)=0$, 
$ 
\lim_{T\to\infty}\hat V(\phi(T,x))=0.
$
Letting $T\to\infty$ therefore yields
\begin{equation}\label{eq:improper_integral}
V(x) - \hat V(x)=\int_0^\infty r(\phi(t,x))\,dt .    
\end{equation}
Using $|r|\le\varepsilon\omega$ and $\omega\ge0$,
\begin{align*}
|\hat V(x)-V(x)|
& \le
\int_0^\infty |r(\phi(t,x))|\,dt \\ 
& \le
\varepsilon\int_0^\infty \omega(\phi(t,x))\,dt
= 
\varepsilon V(x),
\end{align*}
which proves \eqref{eq:relative_error}. The bound  
\eqref{eq:posterior_error} follows immediately. 
\end{proof}

\begin{rem}
The residual condition \eqref{eq:relative_residual_lyap} is expressed relative to $\omega(x)$ rather than as a uniform bound on $|r(x)|$. This is essential for obtaining the state-dependent bounds \eqref{eq:relative_error} and \eqref{eq:posterior_error}. The following example shows that an absolute bound $|r(x)|\le \delta$ does not necessarily yield a finite trajectory-integral needed for bounding $|\hat V(x)-V(x)|$.

Consider $\dot x=-x$ on $\Omega=(-\tfrac12,\tfrac12)$ with $\omega(x)=x^2$. Then $\phi(t,x)=xe^{-t}$ and $V(x)=\int_0^\infty x^2e^{-2t}dt=\tfrac12x^2$. Let
\[
r(x)=
\begin{cases}
\displaystyle \frac{\delta\log 2}{|\log|x||}, & x\neq 0,\\[0.6ex]
0, & x=0.
\end{cases}
\]
Then $r$ is continuous, $r(0)=0$, and $|r(x)|\le\delta$ on $\Omega$. However, for $x\neq0$,
\[
\int_0^\infty r(\phi(t,x))\,dt
=
\delta\log 2\int_0^\infty \frac{dt}{t+|\log|x||}
=
\infty.
\]
Thus a continuous residual with arbitrarily small uniform bound does not necessarily provide a finite bound on $|\hat V(x)-V(x)|$. This motivates the relative residual condition \eqref{eq:relative_residual_lyap}. \hfill $\Box$
\end{rem}

We also state the following one-sided consequence of Theorem~\ref{thm:lyap_error}, which is of particular interest for the construction and verification of Lyapunov functions.

\begin{cor}
\label{cor:onesided}
Let Assumption~\ref{ass:lyap} hold and let $\hat V\in C^1(\Omega)$ satisfy
$\hat V(0)=0$. Assume that the residual defined in \eqref{eq:lyap_residual} satisfies
\begin{equation}\label{eq:relative_residual_cor}
r(x)\le \varepsilon\,\omega(x)\qquad \forall x\in\Omega,
\end{equation}
for some $\eps\in [0,1)$, and $\omega(x)$ is positive definite. Then for all $x\in\Omega$,
\begin{equation}\label{eq:lyap_one_side_bound}
\hat V(x)\ge (1-\varepsilon)V(x),
\end{equation}
\begin{equation}\label{eq:lyap_inequality}
D\hat V(x)\cdot f(x)\le -(1-\varepsilon)\,\omega(x),    
\end{equation}
and, by construction, $\hat V$ is positive definite on $\Omega$. Hence $\hat V$ is a Lyapunov function for \eqref{eq:sys} on $\Omega$.
\end{cor}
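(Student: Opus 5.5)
The plan is to reuse the trajectory identity \eqref{eq:improper_integral} from the proof of Theorem~\ref{thm:lyap_error}. Its derivation never used the two-sided bound \eqref{eq:relative_residual_lyap}; it only needed the residual integral to make sense.

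Fix $x\in\Omega$. By forward invariance, $\phi(t,x)\in\Omega$ for all $t\ge0$, so the one-sided bound \eqref{eq:relative_residual_cor} holds along the whole trajectory. Integrating $\frac{d}{dt}\hat V(\phi(t,x))=-\omega(\phi(t,x))+r(\phi(t,x))$ over $[0,T]$ gives
\[
\hat V(x)=\hat V(\phi(T,x))+\int_0^T\big(\omega(\phi(t,x))-r(\phi(t,x))\big)\,dt .
\]
We have $\omega-r\ge(1-\varepsilon)\omega\ge0$, so the integrand is nonnegative and the integral is monotone in $T$. Moreover, $\hat V(\phi(T,x))\to\hat V(0)=0$ by continuity, since $\phi(T,x)\to0$. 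Letting $T\to\infty$, the monotone limit exists, possibly as $+\infty$; but the left side is finite, so the limit is finite. This yields
\[
\hat V(x)=\int_0^\infty(\omega-r)(\phi(t,x))\,dt\ \ge\ (1-\varepsilon)\int_0^\infty\omega(\phi(t,x))\,dt=(1-\varepsilon)V(x),
\]
using \eqref{eq:lyap_integral} from Lemma~\ref{lem:lyap}. This is \eqref{eq:lyap_one_side_bound}.

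This is the only delicate point. Without a lower bound on $r$, the integral $\int_0^\infty r$ need not converge absolutely, so I avoid splitting it off as in Theorem~\ref{thm:lyap_error}. Instead I integrate the nonnegative combination $\omega-r$ directly, and monotone convergence makes the limit well defined. Finiteness of $V(x)$ on $\Omega$ is available from Lemma~\ref{lem:lyap}; in any case the inequality would force it, because $V(x)\le\hat V(x)/(1-\varepsilon)$.

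Inequality \eqref{eq:lyap_inequality} is immediate from the definition \eqref{eq:lyap_residual}: $D\hat V\cdot f=-\omega+r\le-(1-\varepsilon)\omega$. For positive definiteness, I first show $V$ is positive definite. For $x\ne0$, $\omega(x)>0$ and $\omega$ is continuous, so $\omega(\phi(t,x))>0$ for small $t$, which gives $V(x)>0$. Then $\hat V(x)\ge(1-\varepsilon)V(x)>0$ because $\varepsilon<1$, and $\hat V(0)=0$. Combined with the strict decrease $D\hat V\cdot f\le-(1-\varepsilon)\omega<0$ for $x\neq0$, this shows that $\hat V$ is a $C^1$ Lyapunov function on $\Omega$.
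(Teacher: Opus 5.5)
Your proposal is correct and follows the paper's argument. You integrate $\frac{d}{dt}\hat V(\phi(t,x))=-\omega+r$ along trajectories in the forward-invariant set $\Omega$, pass to the limit using $\hat V(\phi(T,x))\to 0$, and compare with $V$ via the one-sided bound, which is exactly what the paper does through \eqref{eq:improper_integral}; positive definiteness of $\hat V$ is then inherited from that of $V$, and your monotone-convergence treatment of $\omega-r\ge(1-\varepsilon)\omega\ge 0$ is a careful but not strictly necessary refinement, since $\int_0^T r(\phi(t,x))\,dt=\hat V(\phi(T,x))-\hat V(x)+\int_0^T\omega(\phi(t,x))\,dt$ already converges because $V(x)<\infty$ on $\Omega$.
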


\begin{proof}
The inequality $\hat V(x)\ge (1-\varepsilon)V(x)$ follows directly from the proof of 
Theorem~\ref{thm:lyap_error}, in particular from \eqref{eq:improper_integral}. From the definition of the residual \eqref{eq:lyap_residual} and the bound \eqref{eq:relative_residual_cor}, we obtain \eqref{eq:lyap_inequality}. Since $\omega$ is positive definite, we can show that
\[
V(x)=\int_0^\infty \omega(\phi(t,x))\,dt
\]
is positive definite. By \eqref{eq:lyap_one_side_bound}, $\hat V$ is also positive definite. Furthermore,
$D\hat V(x)\cdot f(x)$ is negative definite by \eqref{eq:lyap_inequality}. We conclude that $\hat V$ is a Lyapunov function for \eqref{eq:sys} on $\Omega$.
\end{proof}

\begin{rem}
\label{rem:sublevel}
If $\Omega$ is not known to be forward invariant, one may instead work on a
sublevel set of $\hat V$. Let
$$
\Omega_c:=\{x\in\Omega:\hat V(x)\le c\},
$$
and assume that the residual bound \eqref{eq:relative_residual_cor} holds on
$\Omega_c$ and that
$
\overline{\Omega_c}\cap \partial\Omega=\varnothing.
$
Then $\Omega_c$ is forward invariant. Indeed, by \eqref{eq:lyap_inequality},
$D\hat V(x)\cdot f(x)<0$ whenever $\hat V(x)=c$. Hence a 
trajectory starting from $\Omega_c$ cannot leave $\Omega_c$ through the level
set $\{\hat V=c\}$. Since $\overline{\Omega_c}\cap\partial\Omega=\varnothing$,
it also cannot reach the boundary of $\Omega$ while remaining in $\Omega_c$.
Therefore $\Omega_c$ is forward invariant, and the conclusions of
Theorem~\ref{thm:lyap_error} and Corollary~\ref{cor:onesided} remain valid on
$\Omega_c$. \hfill $\Box$
\end{rem}

\section{Error bounds for the HJB equation}

We derive solution error bounds for the Hamilton--Jacobi--Bellman (HJB) equation from residual errors of an approximate value function. 
Such bounds are useful not only because they provide control Lyapunov certificates \cite{liu2025formally}, but also because they rigorously quantify optimality gaps.

Consider the control-affine system~\eqref{eq:control_sys} and the associated HJB equation~\eqref{eq:hjb_reduced}.  
Let $\hat V\in C^1(\Omega)$ be an approximate value function for solving (\ref{eq:hjb_reduced}). Define the HJB residual of $\hat V$ by 
\begin{equation}
\label{eq:hjb_residual}
\begin{aligned}
    r(x)
&:=
Q(x) + D\hat V(x)\cdot f(x) \\
&\quad -\tfrac14\, D\hat V(x)\,g(x)\,R(x)^{-1}\,g(x)^{T}\,D\hat V(x)^{T}. 
\end{aligned}
\end{equation}

\begin{ass}
\label{ass:hjb}
Assume that $\Omega\subset\mathbb R^n$ contains the origin in its interior. Furthermore, assume that the following hold:
\begin{enumerate}
    \item The functions $f$ and $g$ are locally Lipschitz. %
    \item The function $Q:\Omega\to\Real_{\ge 0}$ is positive definite with respect to the origin, and for every  $\rho>0$ there exists $\delta>0$ such that
    \(
    Q(x)\ge \delta
    \)
    for all $x\in \Omega\setminus B_\rho$, 
    where $B_\rho:=\{x\in\mathbb{R}^n:\,|x|<\rho\}$. 
    \item The function $R:\Omega\to\Real^{k\times k}$ is locally Lipschitz and $R(x)$ is positive definite for all $x\in\Omega$.
\end{enumerate}
\end{ass}

\begin{thm}\label{thm:hjb_error}
Let Assumption~\ref{ass:hjb} hold, and let $\eps\in[0,1)$.  
Let $\hat V\in C^1(\Omega)$ satisfy $\hat V(0)=0$. Assume that $D\hat V$ is locally Lipschitz on $\Omega$ and $\hat V$ is locally positive definite with respect to the origin and bounded from below on 
$$
\Omega_c:=\{x\in\Omega:\hat V(x)\le c\},
$$
where $\Omega_c$ also satisfies 
$
\overline{\Omega_c}\cap \partial\Omega=\varnothing.
$
Furthermore, assume that the residual of \(\hat V\) defined in \eqref{eq:hjb_residual} satisfies 
\begin{equation}
\label{eq:relative_residual_hjb}
\abs{r(x)}\le \eps Q(x)\qquad \forall x\in \Omega_c.   
\end{equation}
Then for all $x\in\Omega_c$,
\begin{equation}\label{eq:hjb_relative_error}
|\hat V(x)-V^*(x)|\le \varepsilon\,V^*(x),
\end{equation}
\begin{equation}\label{eq:hjb_posterior_error}
|\hat V(x)-V^*(x)|
\le
\frac{\varepsilon}{1-\varepsilon}\,\hat V(x).
\end{equation}
Furthermore, the feedback law
\begin{equation}
\label{eq:control_law}
\hat u(x)=-\frac12 R(x)^{-1}g(x)^\top D\hat V(x)^{T},
\end{equation}
satisfies the optimality gap bound
\begin{equation}
\label{eq:hjb_policy_gap}
0\le J(x,\hat u)-V^*(x)\le \frac{2\varepsilon}{1-\varepsilon}\,V^*(x) \le \frac{2\varepsilon}{(1-\varepsilon)^2}\,\hat V(x).
\end{equation}
\end{thm}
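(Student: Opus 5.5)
The plan is to mimic the proof of Theorem~\ref{thm:lyap_error}, with verification-type arguments in place of the integral representation. First we establish that $\hat u$ keeps trajectories in $\Omega_c$ and drives them to the origin. Along the closed loop $\dot x=f+g\hat u$, a direct computation from \eqref{eq:hjb_residual} and \eqref{eq:control_law} gives
\[
\frac{d}{dt}\hat V(x)=D\hat V f-\tfrac12 D\hat V gR^{-1}g^TD\hat V^T = r - Q - \hat u^TR\hat u .
\]
Using \eqref{eq:relative_residual_hjb}, the right-hand side is at most $-(1-\eps)Q-\hat u^TR\hat u<0$ away from the origin. As in Remark~\ref{rem:sublevel}, this together with $\overline{\Omega_c}\cap\partial\Omega=\varnothing$ gives forward invariance of $\Omega_c$. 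Local Lipschitzness of $D\hat V$, $g$, $R$ makes the closed loop well posed.

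Convergence to the origin comes next. Since $\hat V$ is bounded below on $\Omega_c$ and nonincreasing, $\int_0^\infty Q(\phi)\,dt\le (\hat V(x)-\inf\hat V)/(1-\eps)<\infty$. The uniform lower bound on $Q$ outside $B_\rho$ (Assumption~\ref{ass:hjb}(2)) then implies the trajectory spends finite time outside every $B_\rho$. Combined with local positive definiteness of $\hat V$ and its decrease, which give Lyapunov stability near $0$, this yields $\phi(t)\to0$ and $\hat V(\phi(t))\to0$. This step is where I expect the main care is needed: passing from finite integral of $Q$ to actual convergence. I would argue that a trajectory entering a small sublevel set of $\hat V$ inside $B_\rho$ never leaves it.

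With convergence in hand, integrate to get $\hat V(x)=\int_0^\infty (Q+\hat u^TR\hat u - r)\,dt = J(x,\hat u)-\int_0^\infty r\,dt$. The sign of $r$ is unrestricted, but $|r|\le \eps Q$ gives $(1-\eps)J(x,\hat u)\le \hat V(x)\le(1+\eps)J(x,\hat u)$. In particular $\hat u\in\U$, so $V^*(x)\le J(x,\hat u)\le \hat V(x)/(1-\eps)$.

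For the lower bound on $\hat V$, take any $u\in\U$ with trajectory $\phi(t,x,u)$. Completing the square yields
\[
\frac{d}{dt}\hat V = D\hat V(f+gu) = r - Q - u^TRu + (u-\hat u)^TR(u-\hat u)\ge -(1+\eps)Q-u^TRu,
\]
as long as the trajectory stays in $\Omega_c$. Integrating and using $\hat V(\phi(T))\to 0$ gives $\hat V(x)\le(1+\eps)J(x,u)$, hence $\hat V\le (1+\eps)V^*$. The subtle point is that an arbitrary admissible $u$ may leave $\Omega_c$. I would handle this by restricting to near-optimal controls and arguing that leaving $\Omega_c$ would require $\hat V$ to climb to level $c$, which costs at least $(c-\hat V(x))/(1+\eps)$. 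If this cannot be closed cleanly, I would use $\hat V$ evaluated at the exit time together with positivity of the remaining cost.

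Symmetrically, the upper bound on $V^*$ must be sharpened from $J(x,\hat u)\le\hat V/(1-\eps)$ to $\hat V\ge(1-\eps)V^*$; the same verification inequality with $u=\hat u$ gives $\hat V\ge (1-\eps)J(x,\hat u)\ge(1-\eps)V^*$. So $(1-\eps)V^*\le \hat V\le(1+\eps)V^*$, which is \eqref{eq:hjb_relative_error}. The bound \eqref{eq:hjb_posterior_error} follows as in Theorem~\ref{thm:lyap_error}.

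Finally, $J(x,\hat u)\le \hat V/(1-\eps)\le \frac{1+\eps}{1-\eps}V^*$, so $J(x,\hat u)-V^*\le \frac{2\eps}{1-\eps}V^*$. Applying $V^*\le \hat V/(1-\eps)$ gives the last inequality of \eqref{eq:hjb_policy_gap}, and nonnegativity is the definition of $V^*$.
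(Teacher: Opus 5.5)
There is a genuine gap in the step $\hat V(x)\le(1+\eps)V^*(x)$. The two-sided bound \eqref{eq:hjb_relative_error}, the bound \eqref{eq:hjb_posterior_error}, and the middle inequality of \eqref{eq:hjb_policy_gap} all rest on it. You correctly flag that an arbitrary admissible $u$ may drive $\phi(t,x,u)$ out of $\Omega_c$, where the residual bound (and possibly $\hat V$ itself) is unavailable. Neither of your fixes closes this.

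\begin{itemize}
\item The inequality $\frac{d}{dt}\hat V(\phi(t))\ge-(1+\eps)\bigl(Q+u^TRu\bigr)$ only limits how fast $\hat V$ can \emph{decrease}. Because of the term $(u-\hat u)^TR(u-\hat u)$, $\hat V$ can \emph{increase} cheaply, so ``climbing to level $c$ costs at least $(c-\hat V(x))/(1+\eps)$'' is false in general. For instance, take $\dot x=x+u$, $Q=x^2$, $R=1$, $\eps=0$ and $\hat V=V^*=(1+\sqrt2)x^2$. The control $u\equiv0$ moves the state from $x_0$ to $x_1>x_0>0$ at cost $(x_1^2-x_0^2)/2$, while $\hat V$ rises by $(1+\sqrt2)(x_1^2-x_0^2)$.
\item Restricting to near-optimal controls does not rescue this. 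Excluding an exit through dynamic programming would require $V^*\ge c/(1+\eps)$ on $\{\hat V=c\}$, which is the very inequality being proved.
\item Integrating only up to the first exit time $t_1$ from $\Omega_c$ yields $\hat V(x)\le c+(1+\eps)\int_0^{t_1}(Q+u^TRu)\,dt$. This is vacuous because $c\ge\hat V(x)$.
\end{itemize}

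The missing idea is to use the verification inequality only on the \emph{tail} of the trajectory, after the \emph{last} time it lies outside $\Omega_{\hat V(x)}=\{y\in\Omega:\hat V(y)\le\hat V(x)\}$.

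\begin{itemize}
\item If $\hat V(x)\le0$ there is nothing to prove, since $V^*\ge0$.
\item Otherwise, $u\in\mathcal U$ gives $\phi(t)\to0$ and $\hat V(\phi(t))\to0<\hat V(x)$. Hence the time $\tau_x:=\sup\{t\ge0:\phi(t)\notin\Omega_{\hat V(x)}\}$ is finite (with $\tau_x=0$ if this set is empty).
\item For all $t\ge\tau_x$ we have $\phi(t)\in\Omega_{\hat V(x)}\subseteq\Omega_c$, and $\hat V(\phi(\tau_x))=\hat V(x)$ by continuity. The condition $\overline{\Omega_c}\cap\partial\Omega=\varnothing$ guarantees $\phi(\tau_x)\in\Omega$.
\item Integrate your completed-square inequality over $[\tau_x,T]$ and let $T\to\infty$. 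This gives $\hat V(x)\le(1+\eps)\int_{\tau_x}^\infty(Q+u^TRu)\,dt\le(1+\eps)J(x,u)$. The excursion on $[0,\tau_x]$ is simply discarded by nonnegativity of the running cost.
\end{itemize}

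This works for every $u\in\mathcal U$ with no near-optimality restriction, and it is how the paper handles this step. The rest of your proposal is sound and essentially follows the paper: closed-loop decrease, forward invariance, convergence, $\hat V\ge(1-\eps)J(x,\hat u)\ge(1-\eps)V^*$, and the algebra for \eqref{eq:hjb_policy_gap}.
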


\begin{proof}
It can be verified by Assumption \ref{ass:hjb} that $\hat u$ is locally Lipschitz. Furthermore, since $\hat V$ is locally positive definite with respect to the origin and $\hat V\in C^1$, we must have $D\hat V(0)=0$. Hence $x=0$ is an equilibrium point for system (\ref{eq:control_sys}) with the controller $\hat u$. 

The HJB residual of \(\hat V\) defined in \eqref{eq:hjb_residual} can be written as 
\[
r(x)
=
Q(x)+\hat u(x)^\top R(x)\hat u(x)
+D\hat V(x)\cdot\bigl(f(x)+g(x)\hat u(x)\bigr).
\]

By the residual bound $r(x)\le \eps Q(x)$, the Lie derivative of $\hat V$ along the closed-loop vector field
\begin{equation}\label{eq:closed_loop_u_hat_pf}
f_{\mathrm{cl}}(x):=f(x)+g(x)\hat u(x)
\end{equation}
satisfies
\begin{align}
D\hat V(x)\cdot f_{\mathrm{cl}}(x)
&=
-\,Q(x)-\hat u(x)^\top R(x)\hat u(x)+r(x)\notag\\
&\le
-(1-\eps)Q(x)\label{eq:lyap_closed_loop_pf}
\end{align}
for all $x\in\Omega_c$. Since $\hat V$ is locally positive definite and $Q$ is positive definite, by a standard Lyapunov argument there exists $\rho>0$ such that every closed-loop trajectory generated by $\hat u$ starting in $B_\rho$ 
converges to the origin. 

We next show that every closed-loop trajectory generated by $\hat u$ starting in $\Omega_c$ converges to the origin. Let $\phi(t):=\phi(t,x,\hat u)$ with $x\in\Omega_c$. By the argument in Remark~\ref{rem:sublevel}, $\Omega_c$ is forward invariant for the closed-loop vector field \eqref{eq:closed_loop_u_hat_pf}. Suppose, to the contrary, that $\phi(t)$ does not converge to the origin. Then, by the local stability property above, $\phi(t)$ can never enter $B_\rho$, so
\(
|\phi(t)|\ge \rho 
\)
for all $t\ge 0$. By Assumption~\ref{ass:hjb}, there exists $\delta>0$ such that
\(
Q(\phi(t))\ge \delta 
\)
for all $t\ge 0$. 
On the other hand, by \eqref{eq:lyap_closed_loop_pf},
\begin{align*}
\frac{d}{dt}\hat V(\phi(t))
&\le
-(1-\eps)Q(\phi(t)).
\end{align*}
Integrating over $[0,T]$ yields
\begin{align*}
\hat V(x)-\hat V(\phi(T))
&\ge
(1-\eps)\int_0^T Q(\phi(t))\,dt\ge
(1-\eps)\delta T.
\end{align*}
Letting $T\to\infty$, we obtain $\hat V(\phi(T))\to -\infty$, which contradicts the assumption that $\hat V$ is bounded below on $\Omega_c$. Hence every closed-loop trajectory starting in $\Omega_c$ must enter $B_\rho$ in finite time and therefore converges to the origin.

We proceed to bound the solution errors using the residual error (\ref{eq:hjb_residual}). Note that both (\ref{eq:hjb_relative_error}) and (\ref{eq:hjb_posterior_error}) are equivalent to 
\begin{align}
\hat V(x) & \ge (1-\eps)V^*(x),\label{eq:lower_bound}\\
\hat V(x) & \le (1+\eps)V^*(x).\label{eq:upper_bound}
\end{align}

Using again the residual bound $r(x)\le \eps Q(x)$, along $\phi(t)=\phi(t,x,\hat u)$ we have
\begin{align*}
\frac{d}{dt}\hat V(\phi(t))
&=
D\hat V(\phi(t))\cdot
\bigl(f(\phi(t))+g(\phi(t))\hat u(\phi(t))\bigr) \\
&=r(\phi(t))-Q(\phi(t))-\hat u(\phi(t))^\top R(\phi(t))\hat u(\phi(t))\\
&\le
-(1-\eps)Q(\phi(t))
-\hat u(\phi(t))^TR(\phi(t))\hat u(\phi(t)) \\
&\le
-(1-\eps)
\Bigl(
Q(\phi(t))
+\hat u(\phi(t))^TR(\phi(t))\hat u(\phi(t))
\Bigr).
\end{align*}
Integrating over $[0,T]$ gives
\begin{align*}
&\hat V(x)-\hat V(\phi(T))\\
&\qquad\ge
(1-\eps)\int_0^T
\Bigl(
Q(\phi(t))
+\hat u(\phi(t))^TR(\phi(t))\hat u(\phi(t))
\Bigr)\,dt.
\end{align*}
Since $\phi(t)\to 0$ and $\hat V(0)=0$, letting $T\to\infty$ yields
\begin{align*}
\hat V(x)
&\ge
(1-\eps)\int_0^\infty
\Bigl(
Q(\phi(t))
+\hat u(\phi(t))^TR(\phi(t))\hat u(\phi(t))
\Bigr)\,dt \\
&=
(1-\eps)\,J(x,\hat u),
\end{align*}
where $J(x,\hat u)$ is the infinite-horizon cost associated with the feedback control $\hat u$. In particular, $\hat u$ is admissible, and therefore (\ref{eq:lower_bound}) follows from
\begin{align}
V^*(x)=\inf_{u\in\mathcal U}J(x,u)
\le
J(x,\hat u)
\le
\frac{1}{1-\eps}\hat V(x).
\label{eq:control_cost_estimate}
\end{align}

We now prove the bound (\ref{eq:upper_bound}). Fix $x\in\Omega_c$, let $u$ be any admissible control, and write
\(
\phi(t):=\phi(t,x,u).
\)
Define the last exit time
\[
\tau_x
:=
\sup\bigl\{\,t\ge 0:\hat V(\phi(t))>\hat V(x)\bigr\},
\]
with the convention $\tau_x=0$ if the trajectory never leaves the sublevel set
\[
\Omega_{\hat V(x)}:=\{y\in\Omega:\hat V(y)\le \hat V(x)\}.
\]
Then for all $t\ge \tau_x$,
\(
\hat V(\phi(t))\le \hat V(x),
\)
so
\(
\phi(t)\in\Omega_{\hat V(x)}\subseteq\Omega_c, 
\)
for all $t\ge \tau_x$. 
Moreover, by continuity of $t\mapsto \hat V(\phi(t))$, we have
\(
\hat V(\phi(\tau_x))=\hat V(x). 
\)

For $t\ge \tau_x$, by the definition of the residual \eqref{eq:hjb_residual},
\begin{align*}
&\frac{d}{dt}\hat V(\phi(t))\\
&\quad=
D\hat V(\phi(t))\cdot
\bigl(f(\phi(t))+g(\phi(t))u(t)\bigr)\\
&\quad=
-\,Q(\phi(t))\\
&\qquad
+\tfrac14 D\hat V(\phi(t))g(\phi(t))R(\phi(t))^{-1}
g(\phi(t))^TD\hat V(\phi(t))^T\\
&\qquad
+r(\phi(t))
+D\hat V(\phi(t))g(\phi(t))u(t).
\end{align*}
Completing the square gives
\begin{align*}
\frac{d}{dt}\hat V(\phi(t))
&=
-\,Q(\phi(t))
-u(t)^TR(\phi(t))u(t)\\
&\quad
+\eta(t)^TR(\phi(t))\eta(t)
+r(\phi(t)),
\end{align*}
\[
\eta(t)
:=
u(t)+\tfrac12 R(\phi(t))^{-1}g(\phi(t))^TD\hat V(\phi(t))^T.
\]
Since $R$ is positive definite and $r(x)\ge -\eps Q(x)$ on $\Omega_c$, we obtain, for all $t\ge \tau_x$,
\begin{align*}
\frac{d}{dt}\hat V(\phi(t))
&\ge
-(1+\eps)\Bigl(
Q(\phi(t))+u(t)^TR(\phi(t))u(t)
\Bigr).
\end{align*}
Integrating over $[\tau_x,T]$ yields
\begin{align*}
&\hat V(\phi(\tau_x))-\hat V(\phi(T))\\
&\quad\le
(1+\eps)\int_{\tau_x}^T
\Bigl(
Q(\phi(t))+u(t)^TR(\phi(t))u(t)
\Bigr)\,dt.
\end{align*}
Since $u$ is admissible, $\phi(T)\to 0$ as $T\to\infty$, and thus $\hat V(\phi(T))\to \hat V(0)=0$. Using $\hat V(\phi(\tau_x))=\hat V(x)$ and letting $T\to\infty$ gives
\begin{align*}
\hat V(x)
&\le
(1+\eps)\int_{\tau_x}^\infty
\Bigl(
Q(\phi(t))+u(t)^TR(\phi(t))u(t)
\Bigr)\,dt \\
&\le
(1+\eps)\,J(x,u).
\end{align*}
Since this holds for every $u\in\mathcal U$, it follows that
\begin{align*}
\hat V(x)
\le
(1+\eps)\inf_{u\in\mathcal U}J(x,u)=(1+\eps)V^*(x),
\end{align*}
which gives (\ref{eq:upper_bound}). The optimality gap (\ref{eq:hjb_policy_gap}) for $\hat u$ follows from (\ref{eq:lower_bound})--(\ref{eq:control_cost_estimate}). 
\end{proof}

We also state a one-sided version of Theorem \ref{thm:hjb_error} to emphasize the implication for constructing control Lyapunov functions and verification via solving the HJB equation. 

\begin{cor}
\label{cor:hjb_onesided}
Let the assumptions of Theorem~\ref{thm:hjb_error} hold, except that (\ref{eq:relative_residual_hjb}) is replaced by the one-sided condition
\begin{equation}\label{eq:hjb_residual_onesided}
r(x)\le \eps Q(x)\qquad \forall x\in \Omega_c,
\end{equation}
for some $\eps\in[0,1)$. Then for all $x\in\Omega_c$,
\begin{equation}\label{eq:hjb_onesided_value}
\hat V(x)\ge (1-\eps)V^*(x)
\end{equation}
and the feedback law (\ref{eq:control_law})
leads to
\begin{equation}\label{eq:hjb_onesided_clf}
D\hat V(x)\cdot\bigl(f(x)+g(x)\hat u(x)\bigr)
\le -(1-\eps)Q(x).
\end{equation}
Hence $\hat V$ is a control Lyapunov function on
$\Omega_c$.
\end{cor}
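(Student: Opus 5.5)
The plan is to observe that the first half of the proof of Theorem~\ref{thm:hjb_error} uses only the upper residual bound $r(x)\le\eps Q(x)$, and to reuse that half essentially verbatim. Write the residual \eqref{eq:hjb_residual} in the completed-square form
\[
r(x)=Q(x)+\hat u(x)^\top R(x)\hat u(x)+D\hat V(x)\cdot\bigl(f(x)+g(x)\hat u(x)\bigr),
\]
with $\hat u$ given by \eqref{eq:control_law}. Under \eqref{eq:hjb_residual_onesided}, together with $R(x)\succ 0$ so that $\hat u^\top R\hat u\ge 0$, this immediately gives
\[
D\hat V\cdot f_{\mathrm{cl}}\le -(1-\eps)Q-\hat u^\top R\hat u\le -(1-\eps)Q
\]
on $\Omega_c$. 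That is \eqref{eq:hjb_onesided_clf}.

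Next I would establish closed-loop convergence from $\Omega_c$, again copying the earlier argument.
\begin{itemize}
\item $\hat u$ is locally Lipschitz, and $D\hat V(0)=0$ because $\hat V$ is $C^1$ and locally positive definite, so the origin is a closed-loop equilibrium.
\item Local asymptotic stability on some $B_\rho$ follows from the Lyapunov inequality \eqref{eq:hjb_onesided_clf}, since $\hat V$ is locally positive definite and $Q$ is positive definite.
\item Forward invariance of $\Omega_c$ follows from Remark~\ref{rem:sublevel}: on $\{\hat V=c\}$ we have $x\neq 0$, so $D\hat V\cdot f_{\mathrm{cl}}<0$, and $\overline{\Omega_c}\cap\partial\Omega=\varnothing$.
\item Global convergence within $\Omega_c$ goes by contradiction. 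A trajectory that never enters $B_\rho$ has $Q\ge\delta$ along it (Assumption~\ref{ass:hjb}), so $\hat V(\phi(T))\le \hat V(x)-(1-\eps)\delta T\to-\infty$. This contradicts boundedness from below on $\Omega_c$.
\end{itemize}
None of these steps uses the lower bound $r\ge-\eps Q$.

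For \eqref{eq:hjb_onesided_value}, integrate the sharper inequality
\[
\tfrac{d}{dt}\hat V(\phi(t))\le -(1-\eps)\bigl(Q+\hat u^\top R\hat u\bigr)
\]
over $[0,T]$. Then let $T\to\infty$, using $\phi(T)\to 0$ and continuity of $\hat V$ with $\hat V(0)=0$. This gives $\hat V(x)\ge(1-\eps)J(x,\hat u)$, so $J(x,\hat u)<\infty$ and $\hat u\in\mathcal U$. Hence $\hat V(x)\ge(1-\eps)J(x,\hat u)\ge(1-\eps)V^*(x)$.

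Finally, positive definiteness of $\hat V$ on $\Omega_c$ follows from the same bound, provided $V^*$ is positive definite. To see that $V^*(x)>0$ for $x\ne0$, note that any admissible trajectory from $x\neq 0$ spends positive time outside a small ball where $Q\ge\delta>0$. Together with \eqref{eq:hjb_onesided_clf}, this makes $\hat V$ a control Lyapunov function on $\Omega_c$, with $\hat u$ as the witnessing feedback.

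The only delicate point is making sure the convergence and admissibility arguments genuinely need only the one-sided bound. In particular, forward invariance and the $-\infty$ contradiction must be checked to rely solely on \eqref{eq:hjb_onesided_clf}. I expect them to, so the remaining step is just verifying the positivity of $V^*$ away from the origin.
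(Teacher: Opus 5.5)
Your proposal follows the paper's proof essentially step for step. The one-sided bound gives the closed-loop decrease, and the convergence and admissibility argument from Theorem~\ref{thm:hjb_error} is reused; it indeed needs only $r\le\eps Q$. Integration then yields $\hat V(x)\ge(1-\eps)J(x,\hat u)\ge(1-\eps)V^*(x)$, and positive definiteness is inherited from $V^*$. You are slightly more careful than the paper in integrating the sharper inequality containing $\hat u^\top R\hat u$, which is what actually produces $J(x,\hat u)$ rather than only $\int Q$.

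One small caveat concerns your extra justification of $V^*(x)>0$. Showing that each admissible trajectory spends positive time where $Q\ge\delta$ only gives $J(x,u)>0$ for each $u$; it does not give a positive infimum. You would also need that leaving a small ball around $x$ in a short time $\tau$ forces control energy of order $1/\tau$. This follows from Cauchy--Schwarz, using that $R$ is uniformly positive definite and $f,g$ are bounded on that ball. The paper itself simply takes positive definiteness of $V^*$ as given.
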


\begin{proof}
The estimate \eqref{eq:hjb_onesided_clf} follows directly from
\eqref{eq:hjb_residual} and the bound \eqref{eq:hjb_residual_onesided}, exactly as in the proof of Theorem~\ref{thm:hjb_error}. The same argument shows that every closed-loop trajectory generated by $\hat u$ starting in $\Omega_c$ converges to the origin, so $\hat u$ is admissible on $\Omega_c$. Integrating \eqref{eq:hjb_onesided_clf} along the closed-loop trajectory and letting $T\to\infty$ then gives
\(
\hat V(x)\ge (1-\varepsilon)J(x,\hat u). 
\)
Since $V^*(x)\le J(x,\hat u)$, we obtain \eqref{eq:hjb_onesided_value}. Finally, positive definiteness of $\hat V$ follows from \eqref{eq:hjb_onesided_value} and the positive definiteness of $V^*$ by definition.  
This, together with the inequality \eqref{eq:hjb_onesided_clf}, shows that $\hat V$ is a Lyapunov function for the closed-loop system under the feedback law $\hat u$ on $\Omega_c$. As a result, it is also a control Lyapunov function on $\Omega_c$.
\end{proof}

\section{Practically verifiable conditions}

The residual conditions appearing in the previous sections,
\[
|r(x)|\le \varepsilon \omega(x)
\quad\text{or}\quad
|r(x)|\le \varepsilon Q(x),
\]
must hold on a region containing the origin. Directly verifying such inequalities is often difficult because both sides vanish at the equilibrium point. As a result, SMT solvers \cite{gao2013dreal} and neural network verifiers \cite{xu2021fast,wang2021beta} may encounter degenerate situations near the origin, where points arbitrarily close to the equilibrium appear as potential counterexamples.

To address this issue, we assume that the stage cost functions admit a local quadratic lower bound near the origin. That is, there exist $\alpha>0$ and $\rho>0$ such that
\begin{equation}
\label{eq:alpha}
\omega(x)\ge \alpha\|x\|^2,
\qquad
Q(x)\ge \alpha\|x\|^2,
\qquad
\forall x\in B_\rho.
\end{equation}
Then it suffices to verify
\(
|r(x)|\le \varepsilon \alpha \|x\|^2. 
\)
We establish this inequality using a Hessian-type condition.

\begin{prop}
\label{prop:hessian_residual}
Let $\mathcal S\subset \mathbb{R}^n$ be a domain that is star-shaped with respect to the origin, and let $r\in C^2(\mathcal S)$ satisfy
\(
r(0)=0
\)
and 
\( 
Dr(0)=0. 
\)
Assume that
\begin{equation}
\|\nabla^2 r(x)\|_2\le 2\varepsilon\alpha
\qquad
\forall x\in \mathcal S. \label{eq:hessian_condition}   
\end{equation}
Then
\[
|r(x)|\le \varepsilon\alpha\|x\|^2
\qquad
\forall x\in \mathcal S .
\]
\end{prop}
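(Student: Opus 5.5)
The plan is to use Taylor's theorem with integral remainder along the ray from the origin to $x$. Star-shapedness of $\mathcal S$ with respect to the origin is exactly what guarantees that this ray stays in $\mathcal S$. Fix $x\in\mathcal S$ and define the scalar function
\[
h(s):=r(sx),\qquad s\in[0,1].
\]
The segment $\{sx:s\in[0,1]\}$ lies in $\mathcal S$ and $r\in C^2(\mathcal S)$, so $h\in C^2([0,1])$. By the chain rule,
\[
h'(s)=Dr(sx)\,x,\qquad h''(s)=x^T\nabla^2 r(sx)\,x .
\]
The hypotheses give $h(0)=r(0)=0$ and $h'(0)=Dr(0)x=0$.

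Next we apply the integral form of Taylor's theorem to $h$ on $[0,1]$:
\[
h(1)=h(0)+h'(0)+\int_0^1 (1-s)\,h''(s)\,ds=\int_0^1(1-s)\,x^T\nabla^2 r(sx)\,x\,ds .
\]
Since $sx\in\mathcal S$ for every $s\in[0,1]$, the Hessian condition \eqref{eq:hessian_condition} applies at each point of the segment. Combined with $|x^TAx|\le\|A\|_2\|x\|^2$, this yields $|h''(s)|\le 2\varepsilon\alpha\|x\|^2$. Therefore
\[
|r(x)|=|h(1)|\le 2\varepsilon\alpha\|x\|^2\int_0^1(1-s)\,ds=\varepsilon\alpha\|x\|^2 .
\]
The argument is routine. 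The only point that needs care is that the whole segment $[0,x]$ lies inside $\mathcal S$, so that the Hessian bound can be used along it. The star-shaped hypothesis is assumed precisely to secure this; without it the segment could leave $\mathcal S$. Alternatively, one could bound $h'(s)=\int_0^s h''$ by $2\varepsilon\alpha s\|x\|^2$ and then integrate once more, which gives the same constant $\varepsilon\alpha$.
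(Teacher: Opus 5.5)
Your proposal is correct and follows essentially the same argument as the paper: Taylor's theorem with integral remainder along the segment from $0$ to $x$, which star-shapedness keeps inside $\mathcal S$. From there you bound $|x^\top \nabla^2 r(sx)\,x|$ by $2\varepsilon\alpha\|x\|^2$ and integrate $(1-s)$ over $[0,1]$. Reducing to the scalar function $h(s)=r(sx)$ just makes explicit the justification the paper leaves implicit.
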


\begin{proof}
Fix any $x\in \mathcal S$. Since $\mathcal S$ is star-shaped with respect to the origin, we have $tx\in \mathcal S$ for all $t\in[0,1]$. Taylor's theorem with integral remainder gives
\[
r(x)
=
r(0)+Dr(0)x+
\int_0^1 (1-t)\,x^\top \nabla^2 r(tx)\,x\,dt .
\]
Using $r(0)=0$ and $Dr(0)=0$, we obtain
\begin{align*}
r(x) &=
\int_0^1 (1-t)\,x^\top \nabla^2 r(tx)\,x\,dt,    \\
|r(x)|
&\le
\int_0^1 (1-t)\,\|\nabla^2 r(tx)\|_2\,\|x\|^2\,dt\\
&\le
\int_0^1 (1-t)\,2\varepsilon\alpha\,\|x\|^2\,dt
=
\varepsilon\alpha\|x\|^2 . 
\end{align*}
\end{proof}

\begin{rem}\label{rem:bias_correction}
To ensure that $r(0)=0$ and $Dr(0)=0$, we adopt a bias correction trick introduced in \cite{liu2025strict} by modifying $\hat V(x)$ to $\hat V(x) - D\hat V(0)\cdot x - \hat V(0)$. Then the modified $\hat V$ exactly satisfies $\hat V(0)=0$ and $D\hat V(0)=0$, which leads to  $r(0)=0$ and $Dr(0)=0$ for both residuals in (\ref{eq:lyap_residual}) and (\ref{eq:hjb_residual}). 
\end{rem}

\begin{rem}
\label{rem:hessian_condition_training}
We briefly comment on how the local Hessian condition in \eqref{eq:hessian_condition} may hold in practice. Standard neural-network training methods for Lyapunov and HJB PDEs, including \cite{zhou2024physics,meng2024physics}, often produce approximations with small relative residuals on the training and verification domain. For example, if the linearization at the origin is exponentially stable or stabilizable and the stage cost $\omega(x)$ or $Q(x)$ has a positive definite quadratic part, then the quadratic part of $\hat V$ can be trained to approximately satisfy the corresponding linear Lyapunov equation or algebraic Riccati equation. Hence $\nabla^2 r(0)$ can be made small. Since $r$ is $C^2$, continuity of $\nabla^2 r$ implies that \eqref{eq:hessian_condition} holds on a sufficiently small neighbourhood of the origin, such as $B_\rho$ with $\rho>0$ small enough.
\end{rem}

\begin{rem}
\label{rem:domain_split}
Based on Proposition \ref{prop:hessian_residual}, we can decompose $\Omega$ to verify  (\ref{eq:relative_residual_lyap}) or (\ref{eq:relative_residual_hjb}) as follows: Fix a radius $\rho>0$.
\begin{enumerate}
\item \textbf{Inner region.}
Verify that
\[
\|\nabla^2 r(x)\|_F \le 2\varepsilon\alpha,
\qquad
\forall x\in B_\rho .
\]
\item \textbf{Outer region.}
Verify the residual inequality directly: 
\[
|r(x)|\le \varepsilon\,\omega(x)
\quad\text{or}\quad
|r(x)|\le \varepsilon\,Q(x),
\quad
\forall x\in \Omega\setminus B_\rho .
\]
\end{enumerate}
Here, we use the Frobenius norm $|\nabla^2 r(x)|_F$ in (\ref{eq:hessian_condition}), since it is simpler to compute and upper bounds $|\nabla^2 r(x)|_2$.
\end{rem}

\section{Numerical examples}\label{sec:numerical_examples}

We consider an inverted pendulum on $\Omega=[-1,1]^2$.  
For the Lyapunov equation, we use the closed-loop dynamics
\[
\dot x_1=x_2,\;
\dot x_2=\sin(x_1)-x_2-(k_0x_1+k_1x_2),
\]
with $(k_0,k_1)=(4.4142,\,2.3163)$ and $\omega(x)=\|x\|^2$.  
For the HJB equation, we consider the control-affine system
\[
\dot x_1=x_2,\; 
\dot x_2=19.6\sin(x_1)-4x_2+40u,
\]
with $Q(x)=\|x\|^2$ and $R(x)=2$.  

In both cases, we train a one-hidden-layer extreme learning machine (ELM) \cite{huang2006extreme}  neural network $\hat V$ with 400 hidden units to approximately solve the corresponding Lyapunov or HJB PDE using the approach in \cite{zhou2024physics} and the tool LyZNet\footnote{\url{https://git.uwaterloo.ca/hybrid-systems-lab/lyznet/}, directory \texttt{examples/verify-pinn}} \cite{liu2024lyznet}. 
We chose this approach because it solves both Lyapunov and HJB PDEs accurately through linear least-squares optimization. 
Using a bias-corrected approximation (Remark~\ref{rem:bias_correction}) ensures $r(0)=0$ and $Dr(0)=0$.

We can verify the relative residual bounds \eqref{eq:relative_residual_lyap} and \eqref{eq:relative_residual_hjb} on $\Omega$ using $\alpha,\beta$-CROWN \cite{xu2021fast,wang2021beta}. More specifically, we split the domain as described in Remark~\ref{rem:domain_split} (with $\rho=0.1$), and verify the residual bounds for a candidate value of $\varepsilon$. We then perform a bisection search over $\varepsilon$ and report the smallest value for which verification succeeds. This gives $\varepsilon=3.3\times10^{-5}$ for the Lyapunov equation and $\varepsilon=2.2\times10^{-4}$ for the HJB example. In both cases, the verification takes around 120\,s on a single NVIDIA H100 NVL GPU (94\,GB HBM3 memory). Theorems~\ref{thm:lyap_error} and~\ref{thm:hjb_error} then yield certified \emph{a posteriori} error bounds for $\hat V$, shown in Figs.~\ref{fig:lyap_example} and~\ref{fig:hjb_example}. We further examine how the certified residual bound changes with network size; the results are reported in Table~\ref{tab:sensitivity}. As expected, larger networks generally solve the PDEs more accurately and lead to improved certified residual bounds, although verification times increase. We also found that $\rho=0.1$ provides a good compromise: smaller values make outer-region verification harder, while larger values make inner-region Hessian verification more challenging. An optimal choice of $\rho$ could be found by a simple line search, but we do not pursue this optimization here due to the page limit.

We conclude by emphasizing that the main contribution of this paper is to derive verifiable residual-to-solution error bounds for Lyapunov and HJB PDEs. Practical certification of the pointwise residual bounds relies on state-of-the-art verification tools such as $\alpha,\beta$-CROWN and remains challenging in higher dimensions. Extending the verification procedure to higher-dimensional systems through compositional verification, structure-exploiting bounds, or problem-specific decompositions is an interesting direction for future work.

\begin{table}[t]
\centering
\caption{Certified residual bounds and verification times for different network sizes.}
\label{tab:sensitivity}
\begin{tabular}{c c c c c}
\toprule
\makecell{Hidden units} & \makecell{Lyapunov ($\eps$)} & \makecell{time (s)} & \makecell{HJB ($\eps$)} & \makecell{time (s)} \\
\midrule
100 & $1.6\times10^{-3}$ & 23 & $1.3\times10^{-3}$ & 27 \\
200 & $9.5\times10^{-5}$ & 32 & $2.2\times10^{-3}$ & 28 \\
400 & $3.3\times10^{-5}$ & 120 & $2.2\times10^{-4}$ & 132 \\
800 & $1.8\times10^{-5}$ & 487 & $1.4\times10^{-4}$ & 654 \\
\bottomrule
\end{tabular}
\end{table}

\begin{figure}[t]
\centering
\includegraphics[width=\linewidth]{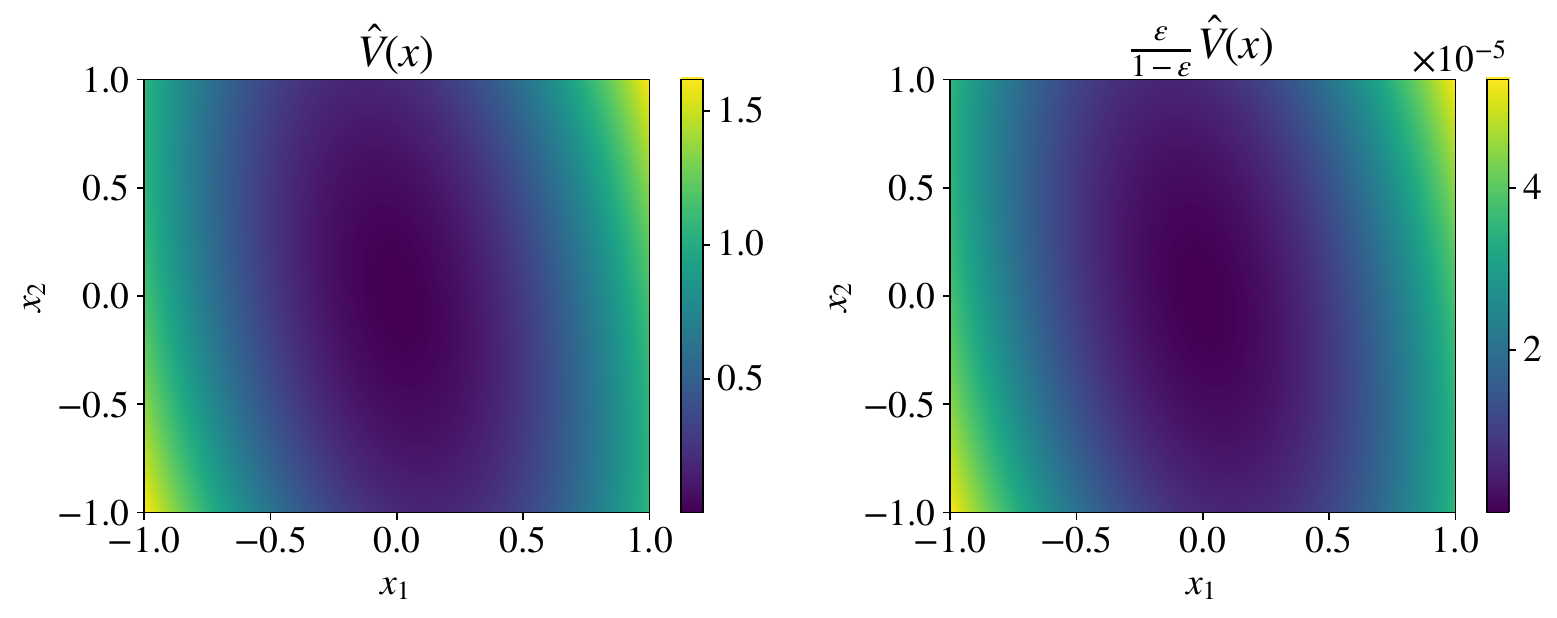}
\caption{
Left: neural approximation of the Lyapunov function.
Right: certified \emph{a posteriori} error bound.
}
\label{fig:lyap_example}
\end{figure}

\begin{figure}[t]
\centering
\includegraphics[width=\linewidth]{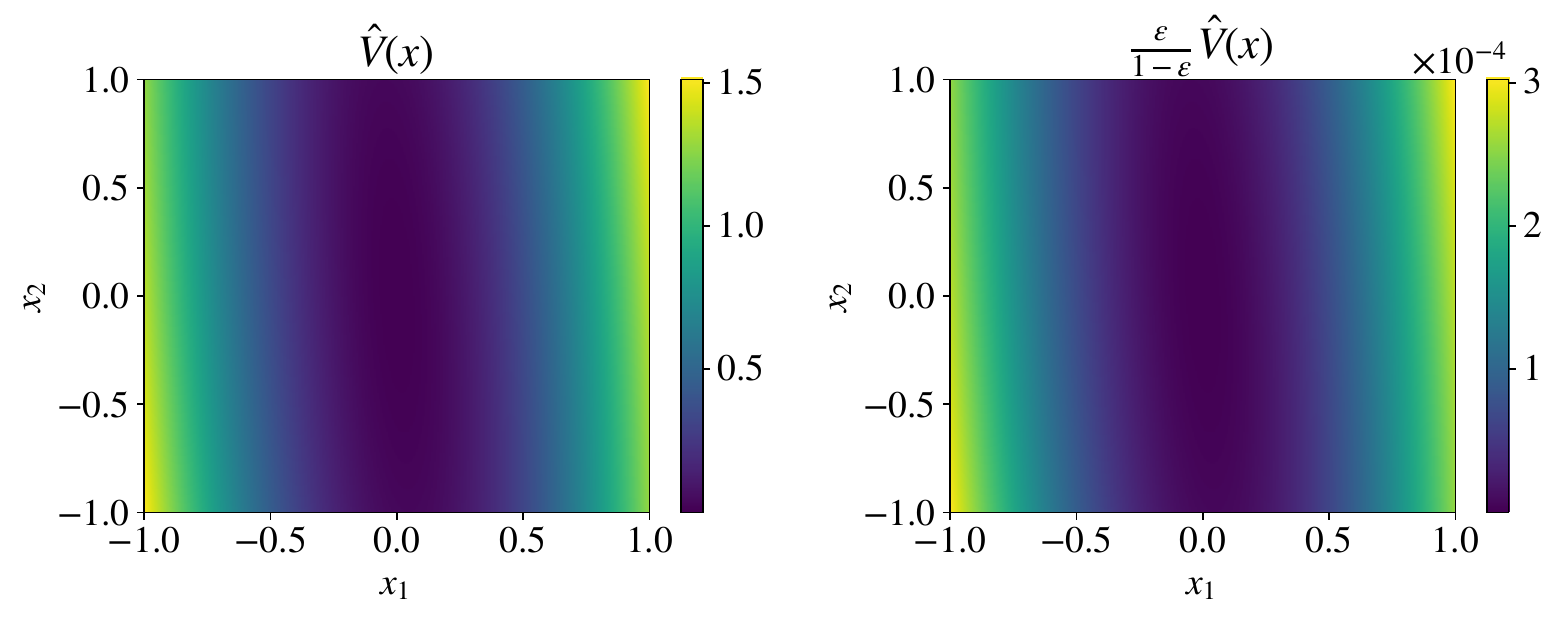}
\caption{
Left: neural approximation of the optimal value function.
Right: certified \emph{a posteriori} error bound.
}
\label{fig:hjb_example}
\end{figure}

\bibliographystyle{plain}
\bibliography{cdc26}

\begin{thebibliography}{10}

\bibitem{bardi1997optimal}
Martino Bardi and Italo Dolcetta-Capuzzo.
\newblock {\em Optimal Control and Viscosity Solutions of
  Hamilton-Jacobi-Bellman Equations}.
\newblock Springer, 1997.

\bibitem{beard1997galerkin}
Randal~W Beard, George~N Saridis, and John~T Wen.
\newblock Galerkin approximations of the generalized
  {H}amilton-{J}acobi-{B}ellman equation.
\newblock {\em Automatica}, 33(12):2159--2177, 1997.

\bibitem{camilli2001generalization}
Fabio Camilli, Lars Gr{\"u}ne, and Fabian Wirth.
\newblock A generalization of {Z}ubov's method to perturbed systems.
\newblock {\em SIAM Journal on Control and Optimization}, 40(2):496--515, 2001.

\bibitem{chang2019neural}
Ya-Chien Chang, Nima Roohi, and Sicun Gao.
\newblock Neural {L}yapunov control.
\newblock In {\em Proc. of NeurIPS}, volume~32, 2019.

\bibitem{fotiadis2025physics}
Filippos Fotiadis and Kyriakos~G. Vamvoudakis.
\newblock A physics-informed learning framework to solve the infinite-horizon
  optimal control problem.
\newblock {\em International Journal of Robust and Nonlinear Control}, 2025.

\bibitem{furfaro2022physics}
Roberto Furfaro, Andrea D'Ambrosio, Enrico Schiassi, and Andrea Scorsoglio.
\newblock Physics-informed neural networks for closed-loop guidance and control
  in aerospace systems.
\newblock In {\em AIAA SCITECH 2022 Forum}, page 0361, 2022.

\bibitem{gaby2022lyapunov}
Nathan Gaby, Fumin Zhang, and Xiaojing Ye.
\newblock Lyapunov-{N}et: A deep neural network architecture for {L}yapunov
  function approximation.
\newblock In {\em Proc. of CDC}, pages 2686--2691, 2022.

\bibitem{gao2013dreal}
Sicun Gao, Soonho Kong, and Edmund~M Clarke.
\newblock d{R}eal: An {SMT} solver for nonlinear theories over the reals.
\newblock In {\em Proc. of CADE}, 2013.

\bibitem{giesl2007construction}
Peter Giesl.
\newblock {\em Construction of Global {L}yapunov Functions using Radial Basis
  Functions}.
\newblock Springer, 2007.

\bibitem{grune2021computing}
Lars Gr{\"u}ne.
\newblock Computing {L}yapunov functions using deep neural networks.
\newblock {\em Journal of Computational Dynamics}, 8(2):131--152, 2021.

\bibitem{huang2006extreme}
Guang-Bin Huang, Qin-Yu Zhu, and Chee-Kheong Siew.
\newblock Extreme learning machine: theory and applications.
\newblock {\em Neurocomputing}, 70(1-3):489--501, 2006.

\bibitem{lagaris1998artificial}
Isaac~E Lagaris, Aristidis Likas, and Dimitrios~I Fotiadis.
\newblock Artificial neural networks for solving ordinary and partial
  differential equations.
\newblock {\em IEEE Transactions on Neural Networks}, 9(5):987--1000, 1998.

\bibitem{liu2025strict}
Jun Liu and Maxwell Fitzsimmons.
\newblock On strict verification of neural {L}yapunov functions.
\newblock {\em IFAC-PapersOnLine}, 59(19):304--309, 2025.

\bibitem{liu2025formally}
Jun Liu, Maxwell Fitzsimmons, Ruikun Zhou, and Yiming Meng.
\newblock Formally verified physics-informed neural control {L}yapunov
  functions.
\newblock In {\em Proc. of ACC}, pages 1347--1354. IEEE, 2025.

\bibitem{liu2023towards}
Jun Liu, Yiming Meng, Maxwell Fitzsimmons, and Ruikun Zhou.
\newblock Towards learning and verifying maximal neural {L}yapunov functions.
\newblock In {\em Proc. of CDC}, 2023.

\bibitem{liu2024lyznet}
Jun Liu, Yiming Meng, Maxwell Fitzsimmons, and Ruikun Zhou.
\newblock {LyZNet}: A lightweight python tool for learning and verifying neural
  {L}yapunov functions and regions of attraction.
\newblock In {\em Proc. of HSCC}, 2024.

\bibitem{liu2025physics}
Jun Liu, Yiming Meng, Maxwell Fitzsimmons, and Ruikun Zhou.
\newblock Physics-informed neural network {L}yapunov functions: {PDE}
  characterization, learning, and verification.
\newblock {\em Automatica}, 175:112193, 2025.

\bibitem{meng2024physics}
Yiming Meng, Ruikun Zhou, Amartya Mukherjee, Maxwell Fitzsimmons, Christopher
  Song, and Jun Liu.
\newblock Physics-informed neural network policy iteration: Algorithms,
  convergence, and verification.
\newblock In {\em Proc. of ICML}, 2024.

\bibitem{mitchell2005time}
Ian~M. Mitchell, Alexandre~M. Bayen, and Claire~J. Tomlin.
\newblock A time-dependent {H}amilton--{J}acobi formulation of reachable sets
  for continuous dynamic games.
\newblock {\em IEEE Transactions on Automatic Control}, 50(7):947--957, 2005.

\bibitem{raissi2019physics}
Maziar Raissi, Paris Perdikaris, and George~E Karniadakis.
\newblock Physics-informed neural networks: A deep learning framework for
  solving forward and inverse problems involving nonlinear partial differential
  equations.
\newblock {\em Journal of Computational Physics}, 378:686--707, 2019.

\bibitem{wang2021beta}
Shiqi Wang, Huan Zhang, Kaidi Xu, Xue Lin, Suman Jana, Cho-Jui Hsieh, and
  J~Zico Kolter.
\newblock {Beta-CROWN}: Efficient bound propagation with per-neuron split
  constraints for complete and incomplete neural network verification.
\newblock {\em Proc. of NeurIPS}, 34, 2021.

\bibitem{xu2021fast}
Kaidi Xu, Huan Zhang, Shiqi Wang, Yihan Wang, Suman Jana, Xue Lin, and Cho-Jui
  Hsieh.
\newblock Fast and complete: Enabling complete neural network verification with
  rapid and massively parallel incomplete verifiers.
\newblock In {\em Proc. of ICLR}, 2021.

\bibitem{zhou2024physics}
Ruikun Zhou, Maxwell Fitzsimmons, Yiming Meng, and Jun Liu.
\newblock Physics-informed extreme learning machine {L}yapunov functions.
\newblock {\em IEEE Control Systems Letters}, 8:1763--1768, 2024.

\bibitem{zhou2022neural}
Ruikun Zhou, Thanin Quartz, Hans De~Sterck, and Jun Liu.
\newblock Neural {L}yapunov control of unknown nonlinear systems with stability
  guarantees.
\newblock {\em Proc. of NeurIPS}, 35, 2022.

\end{thebibliography}

\end{document}